\newtheorem{definition}{Definition}
\newtheorem{theorem}{Theorem}
\newtheorem{lemma}{Lemma}
\newcommand{\f}[1]{\ensuremath{\mathit{f}_{#1}}\xspace}
\newcommand{\n}[1]{\ensuremath{\mathit{n}_{#1}}\xspace}
\newcommand{\len}[1]{\ensuremath{\mathit{w}_{#1}}\xspace}
\newcommand{\W}[1]{\ensuremath{\mathit{W}_{#1}}\xspace}
\newcommand{\parent}[1]{\ensuremath{\mathit{parent}({#1})}\xspace}
\newcommand{\Children}[1]{\ensuremath{\mathit{Children}({#1})}\xspace}
\newcommand{\ParSubtrees}{\textsc{ParSubtrees}\@\xspace}
\newcommand{\ParSubtreesMath}{\textsc{ParSubtrees}} 
\newcommand{\ParSubtreesOptim}{\textsc{ParSubtreesOptim}\@\xspace}
\newcommand{\SplitSubtrees}{\textsc{SplitSubtrees}\@\xspace}
\newcommand{\ParInnerFirst}{\textsc{ParInnerFirst}\@\xspace}
\newcommand{\ParInnerFirstMemLimit}{\textsc{ParInnerFirstMemLimit}\@\xspace}
\newcommand{\ParDeepestFirst}{\textsc{ParDeepestFirst}\@\xspace}
\begin{document}

\makeRR

\section{Introduction}

Parallel workloads are often modeled as task graphs, where nodes
represent tasks and edges represent the dependencies between
tasks. There is an abundant literature on task graph scheduling when
the objective is to minimize the total completion time, or
Makespan. However, as the size of the data to be processed is
increasing, the memory footprint of the application must be optimized
as it can have a dramatic impact on the algorithm execution time.
This is best exemplified with an application which, depending on the
way it is scheduled, will either fit in the memory, or will require
the use of swap mechanisms or out-of-core techniques.  There are very
few existing studies on the minimization of the memory footprint when
scheduling task graphs, and even fewer of them targeting parallel systems.


We consider the following memory-aware parallel scheduling problem for
rooted trees. The nodes of the tree correspond to tasks, and the edges
correspond to the dependencies among the tasks. The dependencies are
in the form of input and output files: each node takes as input
several large files, one for each of its children, and it
produces a single large file, and the different files may have
different sizes. Furthermore, the execution of any node requires its
\emph{execution} file to be present; the execution file can be seen as
the program of the task. 
We are to execute such a set of tasks on a parallel
system made of $p$ identical processing resources sharing the same
memory. The execution scheme corresponds to a schedule of the tree
where processing a node of the tree translates into reading the
associated input files and producing the output file.  How can the tree be
scheduled so as to optimize the memory usage? 

Modern computing platforms exhibit a complex memory hierarchy ranging
from caches to RAM and disks and even sometimes tape storage, with the
classical property that the smaller the memory, the quicker. Thus, to
avoid large running times, one usually wants to avoid the use of
memory devices whose IO bandwidth is below a given threshold: even if
out-of-core execution (when large data are unloaded to disks) is
possible, this requires special care when programming the application
and one usually wants to stay in the main memory (RAM). This is why in
this paper, we are interested in the question of minimizing the amount
of \emph{main memory} needed to completely process an application.

Throughout the paper, we consider \emph{in-trees} where a task can be
executed only if all its children have already been executed. (This is
absolutely equivalent to considering \emph{out-trees} as a solution
for an in-tree can be transformed into a solution for the
corresponding out-tree by just reversing the arrow of time, as outlined
in~\cite{ipdps-tree-traversal}.)
%
%
A task can be
processed only if all its
files (input, output, and execution) fit in currently available
memory. At a given time, many files may be stored in the memory, and
at most $p$ tasks may be processed by  the $p$ processors. This is
obviously possible only if all tasks and execution files fit in memory. When a
task finishes, the memory needed for its execution file and its input files
is released.
Clearly, the schedule which determines the processing times of each
task plays a key role in determining which amount of main memory is
needed for a successful execution of the whole tree.

\medskip 
The first motivation for this
work comes from numerical linear algebra. Tree workflows (assembly or
elimination trees) arise during the factorization of sparse matrices,
and the huge size of the files involved makes it absolutely necessary
to reduce the memory requirement of the factorization.
The sequential version of this problem (i.e., with $p=1$
processor) has already been studied. 
Liu~\cite{Liu86} discusses how to find a memory-minimizing traversal
when the traversal is required to correspond to a postorder traversal
of the tree. In the follow-up study~\cite{Liu87}, an exact algorithm
is shown to solve the problem, without the postorder constraint on the
traversal.  Recently, some of us~\cite{ipdps-tree-traversal} proposed
another algorithm to find a memory-optimal traversal, which proved to be
faster on existing elimination trees, although being of the same
worst-case complexity ($O(n^2)$).

The parallel version of this problem is a natural continuation
of these studies: when processing large elimination trees, it is
very meaningful to take advantage of parallel processing resources. However,
to the best of our knowledge, there exist no theoretical studies for
this problem. The key contributions of this work are:
\begin{compactitem}
\item The proof that the parallel variant of the \emph{pebble game}
  problem is NP-complete. This shows that the introduction of memory
  constraints, in the simplest cases, suffices to make the problem
  NP-hard.
\item The proof that no algorithm can simultaneously deliver a
  constant-ratio approximation for the memory minimization and for the
  makespan minimization.
\item A set of heuristics having different optimizing focus.
\item An exhaustive set of simulations using realistic tree shaped
  task graphs to assess the relative and absolute performance of these
  heuristics.
\end{compactitem}

The rest of this paper is organized as
follows. Section~\ref{sec.related} reviews related studies. The
notation and formalization of the problem are introduced in
Section~\ref{sec.model}. Complexity results are presented in
Section~\ref{sec.complexity} while Section~\ref{sec.heuristics}
proposes different heuristics to solve the problem, which are
evaluated in Section~\ref{sec.experiments}.

\section{Background and Related Work}
\label{sec.related}

\subsection{Sparse matrix factorization}

As mentioned above, determining a memory-efficient tree traversal is
very important in sparse numerical linear algebra. The elimination
tree is a graph theoretical model that represents the storage
requirements, and computational dependencies and requirements, in the
Cholesky and LU factorization of sparse matrices. In a previous study,
we have described how such trees are built, and how the multifrontal
method organizes the computations along the
tree~\cite{ipdps-tree-traversal}.  This is the context of the founding
studies of Liu~\cite{Liu86, Liu87} on memory minimization for
postorder or general tree traversals presented in the previous
section. Memory minimization is still a concern in modern multifrontal
solvers when dealing with large matrices. Among other, efforts have
been made to design dynamic schedulers that takes into account dynamic
pivoting (which impacts the weights of edges and nodes) when
scheduling elimination trees with strong memory
constraints~\cite{guermouche04ipdps}, or to consider both task and tree
parallelism with memory constraints~\cite{agulloPP12}. While these
studies try to optimize memory management in existing parallel
solvers, we aim at designing a simple model to study the fundamental
underlying scheduling problem.

\subsection{Scientific workflows}

The problem of scheduling a task graph under memory constraints also
appears in the processing of scientific workflows whose tasks require
large I/O files. Such workflows arise in many scientific fields, such
as image processing, genomics or geophysical simulations. The problem
of task graphs handling large data has been identified
in~\cite{ramakrishnan07ccgrid} which proposes some simple heuristic
solutions. Surprisingly, in the context of quantum chemistry
computations, Lam et al.~\cite{rauber11CLSS} have recently
rediscovered the algorithm published in 1987 in~\cite{Liu87}.

\subsection{Pebble game and its variants}
\label{sec.pebble}

\newcommand{\MinMemory}{\textsc{MinMemory}\xspace}
\newcommand{\MinIO}{\textsc{MinIO}\xspace}

On the more theoretical side, this work builds upon the many papers
that have addressed the pebble game and its variants.  Scheduling a
graph on one processor with the minimal amount of memory amounts to
revisiting the I/O pebble game with pebbles of arbitrary sizes that
must be loaded into main memory before \emph{firing} (executing) the
task.  The pioneering work of Sethi and Ullman~\cite{SethiUllman70}
deals with a variant of the pebble game that translates into the
simplest instance of our problem when all input/output files have
weight 1 and all execution files have weight 0. The concern
in~\cite{SethiUllman70} was to minimize the number of registers that
are needed to compute an arithmetic expression.  The problem of
determining whether a general DAG can be executed with a given number
of pebbles has been shown NP-hard by Sethi~\cite{Sethi73} if no vertex
is pebbled more than once (the general problem allowing recomputation,
that is, re-pebbling a vertex which have been pebbled before, has been
proven \textsc{Pspace} complete~\cite{gilbert80}). However, this
problem has a polynomial complexity for tree-shaped
graphs~\cite{SethiUllman70}.

To the best of our knowledge, there have been no attempts to
extend these results to parallel machines, with the objective of
minimizing both memory and total execution time. We present such an
extension in Section~\ref{sec.complexity}.


\section{Model and objectives}
\label{sec.model}

\subsection{Application model}

We consider in this paper a tree-shaped task-graph $T$ composed of $n$
nodes, or tasks, numbered from $1$ to $n$. Nodes in the tree have an
output file, an execution file (or program), and several input files
(one per child). More precisely:
\begin{itemize}
\item Each node $i$ in the tree has an execution file of size \n{i}
  and its processing on a processor takes time \len{i}.
\item Each node $i$ has an output file of size \f{i}. If $i$ is not
  the root, its output file is used as input by its parent
  $\parent{i}$; if $i$ is the root, its output file can be of size
  zero, or contain outputs to the outside world.
\item Each non-leaf node $i$ in the tree has one input file per
  child. We denote by $\Children{i}$ the set of the children of
  $i$. For each child $j \in \Children{i}$, task $j$ produces a file
  of size \f{j} for $i$. If $i$ is a leaf-node, then $\Children{i} =
  \emptyset$ and $i$ has no input file: we consider that the initial
  data of the task either reside in its execution file or are read
  from disk (or received from the outside word) during the execution
  of the task.
\end{itemize}

During the processing of a task $i$, the memory must contain its input
files, the execution file, and the output file. The memory needed for
this processing is thus:
$$
\left(\sum_{j \in \Children{i}} \f{j}\right) + \n{i} + \f{i}
$$
After $i$ has been processed, its input files and program are
discarded, while its output file is kept in memory until the
processing of its parent.


\subsection{Platform model and objectives}

In this paper, our goal is to design the simpler platform model which
allows to study memory minimization on a parallel platform. We thus
consider $p$ identical processors which share a single memory.  We do
not consider here a hard constraint on the memory, but we rather
include memory in the objectives. We thus consider multi-criteria
optimization with the following two objectives:
\begin{itemize}
\item \textbf{Makespan.} Our first objective is the classical
  makespan, or total execution time, which corresponds to the
  times-span between the beginning of the execution of the first leaf
  task and the end of the processing of the root task.
\item \textbf{Memory.} Our second objective is the amount of memory
  needed for the computation. At each time step, some files are stored
  in the memory and some task computations occur, which induces a
  memory usage. The \emph{peak memory} is the maximum usage of the
  memory over the whole schedule, which we aim at minimizing.
\end{itemize}


\section{Complexity results\\ in the Pebble Game model}
\label{sec.complexity}

Since there are two objectives, the decision version of our problem can be stated as follows.
\begin{definition}[BiObjectiveParallelTreeScheduling]
  Given a tree-shaped task graph $T$ provided with memory weights and
  task durations, $p$ processors, and two bounds $B_{C_{\max}}$ and
  $B_{\mathit{mem}}$, is there a schedule of the task graph on the
  processors whose makespan is not larger than $B_{C_{\max}}$ and
  whose peak memory is not larger than $B_{\mathit{mem}}$?
\end{definition}

This problem is obviously NP-complete. Indeed, when there are no
memory constraints ($B_{\mathit{mem}} = \infty$) and when the task
tree does not contain any inner node, that is, when all tasks are
either leaves or the root, then our problem is equivalent to
scheduling independent tasks on a parallel platform which is an
NP-complete problem as soon as tasks have different execution
times~\cite{LenstraRKBr77}. On the contrary minimizing the makespan
for a tree of same-size tasks can be solve in polynomial-time when
there are no memory constraints~\cite{Hu61}. In this section, we
consider the simplest variant of the problem. We assume that all input
files have the same size ($\forall i, \f{i} = 1$) and no extra memory
is needed for computation ($\forall i, \n{i}=0$). Furthermore, we
assume that the processing of each node takes a unit time: $\forall i,
\len{i} = 1$. We call this variant of the problem the \emph{Pebble
  Game} model since it perfectly corresponds to pebble game problems
introduced above: the weight $\f{i} = 1$ corresponds to the pebble put
on one node once it has been processed and its results is not yet
discarded. Processing a node requires to put an extra pebble on this
node and is done in unit time.

In this section, we first show that even in this simple variant, the
introduction of memory constraints (a limited number of pebbles) makes
the problem NP-hard (Section~\ref{sec:np}). Then, we show that when
trying to minimize both memory and makespan, it is not possible to get
a solution with a constant approximation ratio for both objectives
(Section~\ref{sec:inapprox}).


\subsection{NP-completeness}
\label{sec:np}

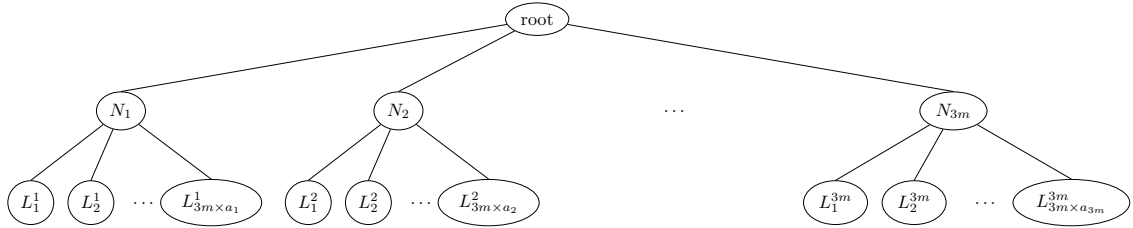
\begin{figure*}
  \centering
  \resizebox{\textwidth}{!}{
    \begin{tikzpicture}[scale=0.9, child anchor = north]
      \tikzstyle{every node}=[ellipse, draw]
      \node{root}
      [sibling distance=60mm, level distance=20mm]
      child{
        node{$N_1$}    [sibling distance=13mm]
        child{node{$L^1_1$}}
        child{node{$L^1_2$}}
        child{node[draw=none]{\ldots~~~} edge from parent [draw=none]}
        child{node{$L^1_{3m\times a_1}$}}
      }
      child{
        node{$N_2$}    [sibling distance=13mm]
        child{node{$L^2_1$}}
        child{node{$L^2_2$}}
        child{node[draw=none]{\ldots~~~} edge from parent [draw=none]}
        child{node{$L^2_{3m\times a_2}$}}
      }
      child{node[draw=none]{\ldots} edge from parent [draw=none]}
      child{
        node{$N_{3m}$}    [sibling distance=17mm]
        child{node{$L^{3m}_1$}}
        child{node{$L^{3m}_2$}}
        child{node[draw=none]{\ldots~~~} edge from parent [draw=none]}
        child{node{$L^{3m}_{3m\times a_{3m}}$}}
      }
      ;
    \end{tikzpicture}
  }
  \caption{Tree used for the NP-completeness proof}
  \label{fig:np}
\end{figure*}

\begin{theorem}
  The BiObjectiveParallelTreeScheduling problem is NP-complete in the
  Pebble Game model (i.e., with $\forall i, \f{i} = \len{i} = 1, \n{i}
  = 0$).
\end{theorem}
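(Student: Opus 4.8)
The plan is to establish NP-completeness by reduction from \textsc{3-Partition}, which is strongly NP-complete and therefore allows us to encode the integers $a_1, \dots, a_{3m}$ in unary — this is essential because the tree in Figure~\ref{fig:np} has $3m \times a_j$ leaves under node $N_j$, so the construction has polynomial size only if the $a_j$ are polynomially bounded. Recall an instance of \textsc{3-Partition}: given $3m$ positive integers $a_1,\dots,a_{3m}$ with $\sum_j a_j = m B$ and each $a_j$ strictly between $B/4$ and $B/2$, decide whether the integers can be partitioned into $m$ triples each summing to exactly $B$. Membership in NP is immediate: a schedule is a polynomial-size certificate (each of the $n$ tasks is assigned a start time in $\{1,\dots,n\}$ and a processor), and checking the makespan and peak-memory bounds is straightforward.

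For the hardness direction, given a \textsc{3-Partition} instance I build the tree of Figure~\ref{fig:np}: a root with $3m$ children $N_1,\dots,N_{3m}$ (the inner nodes), where $N_j$ has $3m \times a_j$ leaf children. The total number of leaves is $3m \sum_j a_j = 3m^2 B$. I would set the number of processors to $p = 3mB$ (so that exactly the leaves of one ``slot'' worth of work fit in a round), the makespan bound to $B_{C_{\max}} = m + 2$ (intuitively: $m$ rounds to process all leaves, one round for the $N_j$'s, one round for the root — with the crucial point that an $N_j$ can only fire once \emph{all} its $3m a_j$ leaves are done), and the memory bound to $B_{\mathit{mem}}$ chosen just tight enough (roughly $3mB + $ a small term) that at any time the pebbles on already-completed-but-not-yet-consumed leaves, plus pebbles being placed, cannot exceed what three groups' worth of leaves plus their parents require. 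The intended schedule, given a valid 3-partition into triples $\{a_{j_1}, a_{j_2}, a_{j_3}\}$ summing to $B$ each, processes in round $k$ exactly the leaves of the three nodes in the $k$-th triple (that is $3m a_{j_1} + 3m a_{j_2} + 3m a_{j_3} = 3mB = p$ leaves, exactly filling the round), then fires those three $N_j$'s at the start of the next available step, keeping the peak pebble count controlled; after $m$ such rounds all $N_j$ have fired, and the root fires last.

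The main obstacle — and where the bulk of the proof effort goes — is the converse: showing that a schedule meeting \emph{both} bounds forces a valid 3-partition. The makespan bound alone does not obviously do it, so the argument must combine makespan and memory. The key structural facts to nail down are: (i) with makespan $m+2$ and the observed dependency that each $N_j$ needs all its leaves finished before it can start, the leaves of any single $N_j$ must all be processed within the first $m$ steps, and moreover in a contiguous-enough way; (ii) the memory bound prevents ``spreading'' a node's leaves thinly across many rounds while accumulating pebbles, and prevents completing leaves of more than three nodes' worth of work per round without overflowing — so each of the $m$ leaf-processing rounds handles leaves whose parent-groups have sizes summing to exactly $p = 3mB$; (iii) since each $a_j \in (B/4, B/2)$, a group sum of exactly $3mB$ built from the $3m a_j$ values must come from exactly three of the $N_j$'s. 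Combining (i)–(iii), the partition of the $m$ rounds into the groups they serve is exactly a partition of $\{a_1,\dots,a_{3m}\}$ into $m$ triples of sum $B$. I would present this as a sequence of short lemmas (leaf-completion-time lemma, then a memory-counting lemma bounding how many leaves can complete per step, then the arithmetic observation forcing triples), and finish by checking that the intended schedule indeed respects $B_{\mathit{mem}}$, which is the one routine-but-necessary computation.
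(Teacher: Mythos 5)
Your reduction is the same as the paper's (same tree, same strongly NP-complete \textsc{3-Partition} variant with $B/4<a_j<B/2$, $p=3mB$ processors, memory bound of the form $3mB$ plus a lower-order term, and a converse driven by the observation that the memory bound forbids firing four or more $N_j$'s in one step). However, your makespan bound $B_{C_{\max}}=m+2$ is wrong, and this breaks the forward direction of the reduction. Your intended schedule fires the three $N_j$'s of triple $k$ ``at the start of the next available step,'' i.e., concurrently with the leaf round of triple $k+1$. This is impossible for two independent reasons. First, processors: a leaf round already occupies all $p=3mB$ processors, leaving none for the three $N_j$'s. Second, and more fundamentally, memory: while the three $N_j$'s of triple $k$ are being processed, their $3mB$ input pebbles must still be in memory, so adding the $3mB$ fresh pebbles of the next leaf round would require about $6mB$ memory, far above any bound of the form $3mB+O(m)$ that the converse direction needs. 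Hence leaf rounds and $N$-rounds must strictly alternate, the valid schedule takes $2m+1$ steps (which is exactly the bound the paper uses), and with $B_{C_{\max}}=m+2$ a yes-instance of \textsc{3-Partition} would map to a no-instance of the scheduling problem. The same miscount also undermines your structural fact (i): the leaves cannot all be confined to the first $m$ steps.

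The fix is local — replace $m+2$ by $2m+1$ and re-verify the round-by-round memory usage (it peaks at $3mB+3n$ and $3mB+3(n+1)$ in the $n$-th pair of rounds, matching $B_{\mathit{mem}}=3mB+3m$). Your converse outline then essentially coincides with the paper's, though note that the paper argues backward from the root (step $2m+1$ is the root, step $2m$ holds at most three $N_j$'s forming $S_1$, step $2m-1$ holds exactly their leaves, and so on), which is a cleaner way to pin down the alternating structure than the forward ``rounds'' framing you sketch; you would also need the small exchange argument the paper uses to rule out an $N_k$ being processed in an odd (leaf) step.
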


\begin{proof}
  First, it is straightforward to check that the problem is in NP:
  given a schedule, it is easy to compute its peak memory and
  makespan.

  To prove the problem NP-completeness, we perform a reduction from
  3-\textsc{Partition}, which is known to be
  NP-complete in the strong sense~\cite{GareyJohnson}. We consider the following instance
  $\mathcal{I}_1$ of the 3-\textsc{Partition} problem: let $a_i$ be $3m$ integers and $B$
  an integer such that $\sum a_i = m B$. We consider the variant of
  the problem, also NP-complete, where $\forall i, B/4 < a_i < B/2$.
  To solve $\mathcal{I}_1$, we need to solve the following question: does there
  exist a partition of the $a_i$'s in $m$ subsets $S_1,\ldots, S_m$,
  each containing exactly 3 elements, such that, for each $S_k$,
  $\sum_{i \in S_k} a_i = B$.  We build the following instance $\mathcal{I}_2$
  of our problem, illustrated on Figure~\ref{fig:np}. The tree
  contains a root $r$ with $3m$ children, the $N_i$'s, each one
  corresponding to a value
  $a_i$. Each node $N_i$ has $3m \times a_i$ children, which are
  leaf nodes. The question is to find a schedule of this tree on $p=3mB$
  processors, whose peak memory is not larger than $B_{\mathit{mem}} =
  3m \times B+3m$ and whose makespan is not larger than $B_{C_{\max}} =
  2m+1$.

  Assume first that there exists a solution to $\mathcal{I}_1$, i.e., that there are $m$
  subsets $S_k$ of 3 elements with $\sum_{i \in S_k} a_i = B$. In this
  case, we build the following schedule:
  \begin{itemize}
  \item At step 1, we process all the nodes $L_x^{i_1}$, $L_y^{j_1}$,
    and $L_z^{k_1}$ with $S_1 = \{a_{i_1}, a_{j_1}, a_{k_1}\}$. There are
    $3mB = p$ such nodes, and the amount of memory needed is also $3mB$.
  \item At step 2, we process the nodes $N_{i_1}$, $N_{j_1}$,
    $N_{k_1}$. The memory needed is $3mB+3$.
  \item At step $2n+1$, with $1 \leq n \leq m-1$, we process the $3mB = p$ nodes $L_x^{i_n}$, $L_y^{j_n}$,
    $L_z^{k_n}$ with $S_n = \{a_{i_n}, a_{j_n}, a_{k_n}\}$. 
    The amount of memory needed 
    is
    $3mB+3n$ (counting the memory for the output files of the $N_t$ nodes previously processed).
  \item At step $2n+2$, with $1 \leq n \leq m-1$, we process the  nodes $N_{i_n}$, $N_{j_n}$,
    $N_{k_n}$. The memory needed for this step is $3mB+3(n+1)$.
  \item At step $2m+1$, we process the root node and the memory needed
    is $3m+1$.
  \end{itemize}
  Thus, the peak memory of this schedule is $B_{\mathit{mem}}$ and its
  makespan $B_{C_{\max}}$.

  On the contrary, assume that there exists a solution to problem
  $\mathcal{I}_2$, that is, that there exists a schedule of makespan
  at most $B_{C_{\max}}=2m+1$.  Without loss of generality, we assume
  that the makespan is exactly $2m+1$. We start by proving that at any
  step of the algorithm there are at most three of the $N_i$ nodes
  that are processed. By contradiction, assume that four (or more)
  such nodes $N_{i_1}, N_{i_2}, N_{i_3}, N_{i_4}$ are processed during
  a certain step. We recall that $a_i>B/4$ so that
  $a_{i_1}+a_{i_2}+a_{i_3}+a_{i_4} > B$ and thus
  $a_{i_1}+a_{i_2}+a_{i_3}+a_{i_4} \geq B+1$. The memory needed at
  this step is thus at least $(B+1)3m$ for the children of the nodes
  $N_{i_1}$, $N_{i_2}$, $N_{i_3}$, and $N_{i_4}$ and $4$ for the nodes
  themselves, hence a total of at least $(B+1)3m+4$, which is more
  than the prescribed bound $B_{\mathit{mem}}$. Thus, at most three of
  $N_i$ nodes are processed at any step.  In the considered schedule,
  the root node is processed at step $2m+1$. Then, at step $2m$, some
  of the $N_i$ nodes are processed, and at most three of them from
  what precedes. The $a_i$'s corresponding to those nodes make the
  first subset $S_1$. Then all the nodes $L_x^j$ such that $a_j \in
  S_1$ must have been processed at the latest at step $2m-1$, and
  they occupy a memory footprint of $3m\sum_{a_j \in S_1} a_j$ at
  steps $2m-1$ and steps $2m$. Let us assume that a node $N_k$ is
  processed at step $2m-1$. For the memory bound $B_{\mathit{mem}}$
  not to be satisfied we must have $a_k + \sum_{a_j \in S_1} a_j \leq
  B$. (Otherwise, we would need a memory of at least $3m(B+1)$ for the
  involved $L_x^j$ nodes plus 1 for the node $N_k$). Therefore, node
  $N_k$ could have been processed at step $2m$. We then modify the
  schedule so as to schedule $N_k$ at step $2m$ and thus we add $k$ to
  $S_1$. We can therefore assume, without loss of generality, that no
  $N_i$ node is processed at step $2m-1$. Then, at step $2m-1$ only
  children of the $N_j$ nodes with $a_j \in S_1$ are processed, and
  all of them are. So, none of them have any memory footprint before
  step $2m-1$. We then generalize this analysis: at step $2i$, for $1
  \leq i \leq m-1$, only some $N_j$ nodes are processed and they
  define a subset $S_i$; at step $2i-1$, for $1 \leq i \leq m-1$, are
  processed exactly the nodes $L_x^k$ that are children of the nodes
  $N_j$ such that $a_j \in S_i$.


  Because of the memory constraint, each of the $m$ subsets of $a_i$'s
  built above sum to at most $B$. Since they contain all $a_i$'s,
  their sum is $mB$. Thus, each subset $S_k$ sums to $B$ and we have
  built a solution for $\mathcal{I}_1$.
\end{proof}

\subsection{Joint minimization of both objectives}

\label{sec:inapprox}

As our problem is NP-complete, it is natural to wonder whether there
exist approximation algorithms. Here, we prove that there does not
exist schedules which approximates both the minimum makespan and the
minimum memory with constant factors\footnote{This is equivalent to
  say that there is no \emph{Zenith} or \emph{simultaneous}
  approximation.}.

\begin{theorem}\label{thm.shared.inapprox}
  There is no algorithm that is both an $\alpha$-approximation for
  makespan minimization and a $\beta$-approximation for memory peak
  minimization when scheduling in-tree task graphs.
\end{theorem}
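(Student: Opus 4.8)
The plan is to prove a much stronger statement: for \emph{every} pair of constants $\alpha,\beta\ge 1$ there is a single instance — even in the Pebble Game model, with $\f{i}=\len{i}=1$ and $\n{i}=0$ — on which no schedule is simultaneously within a factor $\alpha$ of the optimal makespan and within a factor $\beta$ of the optimal peak memory. The instance I would use is the complete binary tree $T_h$ of height $h$ (hence $2^h$ leaves, all at depth $h$, and $2^{h+1}-1$ nodes), run on $p=2^h$ processors, where $h$ is a parameter fixed at the very end as a function of $\alpha$ and $\beta$.

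First I would bound the two optima. The critical path of $T_h$ contains $h+1$ tasks, and the breadth-first schedule that fires all $2^h$ leaves at step~1, all $2^{h-1}$ nodes of depth $h-1$ at step~2, and so on up to the root at step $h+1$ never uses more than the $2^h$ available processors; hence the optimal makespan is $C^{*}_{\max}=h+1$. For memory, a plain sequential depth-first (postorder) traversal keeps at most one output file pending per ancestor level of the node currently processed, so it never needs more than $h+2$ memory units; hence the optimal peak memory satisfies $M^{*}\le h+2$. Only this upper bound on $M^{*}$ will actually be used.

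The core of the argument is the following memory lower bound for \emph{any} schedule of makespan $T$: its peak memory is at least $2^{h}/T$. Fix such a schedule; since it processes the whole tree, each leaf $\lambda$ is fired at some step $f(\lambda)$ and its parent at some step $g(\lambda)>f(\lambda)$, and by the model the output file of $\lambda$ must remain in memory throughout steps $f(\lambda),\dots,g(\lambda)$. Thus each leaf contributes at least one unit to the memory usage at the end of each of the steps $f(\lambda),\dots,g(\lambda)-1$ — in particular at least one such step exists. Summing over the $2^h$ leaves, the total amount of ``leaf output present in memory'' accumulated over the $T$ steps of the schedule is at least $2^h$, so by averaging some step carries at least $2^h/T$ leaf outputs in memory, whence the peak memory is at least $2^h/T$. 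Plugging in $T\le\alpha\,C^{*}_{\max}=\alpha(h+1)$ shows that every $\alpha$-approximate-makespan schedule uses peak memory at least $2^{h}/(\alpha(h+1))$.

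It only remains to combine the bounds: if some schedule were simultaneously an $\alpha$-approximation for makespan and a $\beta$-approximation for memory, then $2^{h}/(\alpha(h+1)) \le \beta M^{*} \le \beta(h+2)$, i.e.\ $2^{h}\le \alpha\beta(h+1)(h+2)$; choosing $h$ large enough that $2^{h}>\alpha\beta(h+1)(h+2)$ makes this impossible, proving the theorem. The one point I expect to require care — the main obstacle to writing fully rigorously — is the memory lower bound: one must argue cleanly that a leaf's output genuinely occupies memory during at least one complete step (it cannot be both produced and consumed in the same step, since the parent needs it as an already-available input, and the model forbids discarding it before the parent runs), so that the averaging over $T$ steps is legitimate. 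Everything else is elementary bookkeeping, and the exponential gap between $2^h$ and the polynomial $\alpha\beta(h+1)(h+2)$ leaves so much slack that the precise constants are immaterial.
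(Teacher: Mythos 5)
Your proposal is correct, and it shares the paper's one essential idea --- bounding the peak memory from below by the \emph{time-averaged} memory usage, which the makespan budget $\alpha\,C^{*}_{\max}$ forces to be large --- but it runs that idea on a genuinely different and much simpler instance. The paper builds a family of $\nwidgets$ caterpillar-shaped subtrees with a tunable parameter $\delta$, computes the optimal peak memory exactly ($\nwidgets+\delta$, which takes about a page of case analysis), and charges a two-step memory footprint to \emph{every} descendant node, obtaining a ratio lower bound of order $\nwidgets\delta^{2}/\bigl(\alpha\delta(\nwidgets+\delta)\bigr)$ that only diverges after the careful choice $\delta=\nwidgets^{2}$. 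You instead take a complete binary tree of height $h$ with $p=2^{h}$, where the optimal makespan ($h+1$, by levels) and an upper bound on the optimal memory ($h+2$, by postorder) are immediate, and you charge one unit of resident memory to each of the $2^{h}$ leaf outputs only; the resulting inequality $2^{h}\le\alpha\beta(h+1)(h+2)$ fails for large $h$ with exponential slack, so no delicate parameter balancing is needed. The one step you rightly flag --- that a leaf's output must occupy memory for at least one full unit of time, because the parent cannot start before the leaf finishes and must hold that input throughout its own unit-length execution --- is sound in the paper's model, and the same point is used (implicitly) in the paper's own two-step footprint count. What your approach gives up is the exact value of the optimal peak memory (you only need the upper bound, which suffices) and the paper's explicit quantitative trade-off curve between the two approximation ratios; what it buys is a substantially shorter and more robust argument.
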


\newcommand{\cp}[2]{\ensuremath{cp^{#1}_{#2}}}
\newcommand{\nd}[2]{\ensuremath{d^{#1}_{#2}}}
\newcommand{\lf}[3]{\ensuremath{a^{#1,#2}_{#3}}}
\newcommand{\fn}[2]{\ensuremath{b^{#1}_{#2}}}
\newcommand{\cc}[2]{\ensuremath{c^{#1}_{#2}}}
\newcommand{\nwidgets}{\ensuremath{n}}

\begin{proof}
  To establish this result, we proceed by contradiction. We therefore
  assume that there is an integer $\alpha$, an integer $\beta$, and an
  algorithm $\mathcal{A}$ that processes any input tree $\mathcal{T}$
  in a time not greater than $\alpha$ times the optimal execution time
  while using a peak memory that is not greater than $\beta$ times the
  optimal peak memory.

  \textbf{The tree.} Figure~\ref{fig:tree_inapprox_shared} presents
  the tree used to derive a contradiction. This tree is made of
  $\nwidgets$ identical subtrees whose roots are the children of the
  tree root. The values of $\nwidgets$ and $\delta$ will be fixed
  later on. 
  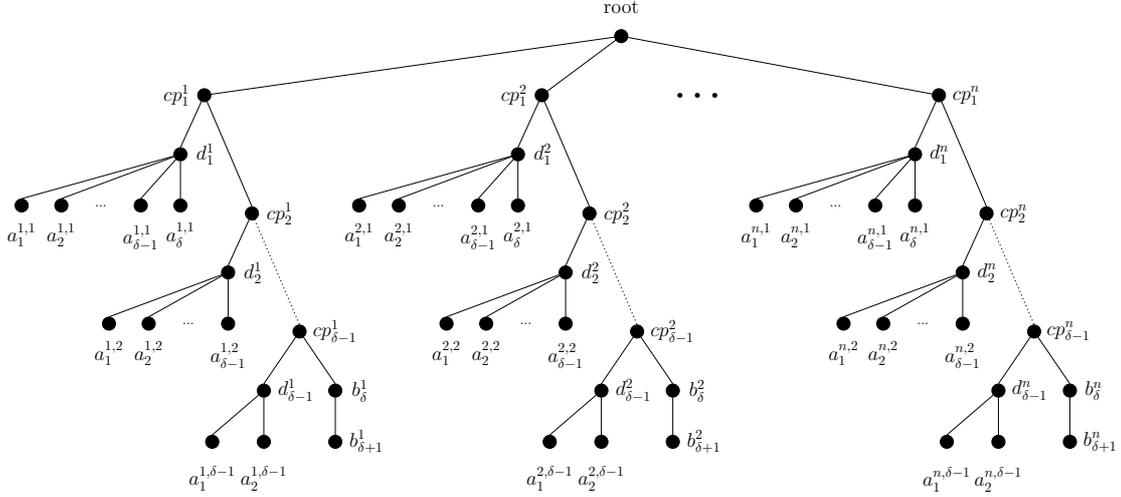
\begin{figure*}
    \centering \resizebox{\textwidth}{!}{\begin{tikzpicture}[child anchor = north, label distance = -2pt]
  \tikzstyle{every node} = [circle, fill=black, draw]
  \tikzstyle{every label} = [font=\Large,fill=none,draw=none]
  \node (root) [label=above:root]{}
  --   (-105mm,-15mm) 
     node[label=left:\cp{1}{1}]{}
    [sibling distance=12mm, level distance=15mm]
    child{
      node[label=right:\nd{1}{1}]{} 
      [grow via three points={one child at (0,-1.3) and two children at (0,-1.3) and (-1,-1.3)}]
      child{node[label=below:\lf{1}{1}{\delta}]{}}
      child{node[label=below:\lf{1}{1}{\delta-1}]{}}
      child{node[draw=none, fill=white]{...} edge from parent [draw=none]}
      child{node[label=below:\lf{1}{1}{2}]{}}
      child{node[label=below:\lf{1}{1}{1}]{}}
    }
    child[level distance=30mm,sibling distance=24mm]{ 
      node[label=right:\cp{1}{2}]{}
      [sibling distance=12mm, level distance=15mm]
      child{
        node[label=right:\nd{1}{2}]{} 
        [grow via three points={one child at (0,-1.3) and two children at (0,-1.3) and (-1,-1.3)},sibling distance=10mm]  
        child{node[label=below:\lf{1}{2}{\delta-1}]{}}
        child{node[draw=none,fill=white]{...} edge from parent [draw=none]}
        child{node[label=below:\lf{1}{2}{2}]{}}
        child{node[label=below:\lf{1}{2}{1}]{}}
      }      
      child[level distance=30mm,sibling distance=24mm]{
        node[label=right:\cp{1}{ \delta-1}]{}    edge from parent[dotted,thick] 
        [level distance=15mm,sibling distance=18mm]
        child[solid,thin]{
          node[label=right:\nd{1}{{\delta-1}}]{} 
           [grow via three points={one child at (0,-1.3) and two children at (0,-1.3) and (-1.3,-1.3)},sibling distance=10mm]
          child{node[label=below:\lf{1}{\delta-1}{2}]{}}
          child{node[label=below:\lf{1}{\delta-1}{1}]{}}
        }
        child[thin,solid]{
          node[label=right:\fn{1}{\delta}]{} edge from parent[solid]
          [sibling distance = 10mm, level distance = 13mm]
          child{
            node[label=right:\fn{1}{\delta+1}]{}
          }
        }
      }
    }
    edge (root)
  --   (-20mm,-15mm) 
    node[label=left:\cp{2}{1}]{}
    [sibling distance=12mm, level distance=15mm]
    child{
      node[label=right:\nd{2}{1}]{} 
      [grow via three points={one child at (0,-1.3) and two children at (0,-1.3) and (-1,-1.3)}]
      child{node[label=below:\lf{2}{1}{\delta}]{}}
      child{node[label=below:\lf{2}{1}{\delta-1}]{}}
      child{node[draw=none, fill=white]{...} edge from parent [draw=none]}
      child{node[label=below:\lf{2}{1}{2}]{}}
      child{node[label=below:\lf{2}{1}{1}]{}}
    }
    child[level distance=30mm,sibling distance=24mm]{ 
      node[label=right:\cp{2}{2}]{}
      [sibling distance=12mm, level distance=15mm]
      child{
        node[label=right:\nd{2}{2}]{} 
        [grow via three points={one child at (0,-1.3) and two children at (0,-1.3) and (-1,-1.3)},sibling distance=10mm]  
        child{node[label=below:\lf{2}{2}{\delta-1}]{}}
        child{node[draw=none,fill=white]{...} edge from parent [draw=none]}
        child{node[label=below:\lf{2}{2}{2}]{}}
        child{node[label=below:\lf{2}{2}{1}]{}}
      }      
      child[level distance=30mm,sibling distance=24mm]{
        node[label=right:\cp{2}{ \delta-1}]{}    edge from parent[dotted,thick] 
        [level distance=15mm,sibling distance=18mm]
        child[solid,thin]{
          node[label=right:\nd{2}{{\delta-1}}]{} 
          [grow via three points={one child at (0,-1.3) and two children at (0,-1.3) and (-1.3,-1.3)},sibling distance=10mm]
          child{node[label=below:\lf{2}{\delta-1}{2}]{}}
          child{node[label=below:\lf{2}{\delta-1}{1}]{}}
        }
        child[thin,solid]{
          node[label=right:\fn{2}{\delta}]{} edge from parent[solid]
          [sibling distance = 10mm, level distance = 13mm]
          child{
            node[label=right:\fn{2}{\delta+1}]{}
          }
        }
      }
    }
edge (root)
  --   (20mm,-15mm) 
 node[draw=none, fill=white]{\Huge \bfseries \ldots } 
  -- (80mm,-15mm) 
    node[label=right:\cp{\nwidgets}{1}]{}
    [sibling distance=12mm, level distance=15mm]
    child{
      node[label=right:\nd{\nwidgets}{1}]{} 
      [grow via three points={one child at (0,-1.3) and two children at (0,-1.3) and (-1,-1.3)}]
      child{node[label=below:\lf{\nwidgets}{1}{\delta}]{}}
      child{node[label=below:\lf{\nwidgets}{1}{\delta-1}]{}}
      child{node[draw=none, fill=white]{...} edge from parent [draw=none]}
      child{node[label=below:\lf{\nwidgets}{1}{2}]{}}
      child{node[label=below:\lf{\nwidgets}{1}{1}]{}}
    }
    child[level distance=30mm,sibling distance=24mm]{ 
      node[label=right:\cp{\nwidgets}{2}]{}
      [sibling distance=12mm, level distance=15mm]
      child{
        node[label=right:\nd{\nwidgets}{2}]{} 
        [grow via three points={one child at (0,-1.3) and two children at (0,-1.3) and (-1,-1.3)},sibling distance=10mm]  
        child{node[label=below:\lf{\nwidgets}{2}{\delta-1}]{}}
        child{node[draw=none,fill=white]{...} edge from parent [draw=none]}
        child{node[label=below:\lf{\nwidgets}{2}{2}]{}}
        child{node[label=below:\lf{\nwidgets}{2}{1}]{}}
      }      
      child[level distance=30mm,sibling distance=24mm]{
        node[label=right:\cp{\nwidgets}{ \delta-1}]{}    edge from parent[dotted,thick] 
        [level distance=15mm,sibling distance=18mm]
        child[solid,thin]{
          node[label=right:\nd{\nwidgets}{{\delta-1}}]{} 
          [grow via three points={one child at (0,-1.3) and two children at (0,-1.3) and (-1.3,-1.3)},sibling distance=10mm]
          child{node[label=below:\lf{\nwidgets}{\delta-1}{2}]{}}
          child{node[label=below:\lf{\nwidgets}{\delta-1}{1}]{}}
        }
        child[thin,solid]{
          node[label=right:\fn{\nwidgets}{\delta}]{} edge from parent[solid]
          [sibling distance = 10mm, level distance = 13mm]
          child{
            node[label=right:\fn{\nwidgets}{\delta+1}]{}
            }
          }
        }
      }
edge (root);
\end{tikzpicture}}
    \caption{Tree used for establishing Theorem~\ref{thm.shared.inapprox}.}
    \label{fig:tree_inapprox_shared}
  \end{figure*}

  \textbf{Optimal execution time.} The optimal execution time is equal
  to the length of the critical path, as we have made no hypothesis on
  the number of available processors. The critical path has a length
  of $\delta+2$, which is the length of the path from the root to any
  $\fn{i}{\delta+1}$, \lf{i}{\delta-1}{1}, or \lf{i}{\delta-1}{2}
  node, with $1 \leq i \leq \nwidgets$.

  \textbf{Optimal peak memory.} Let us consider any sequential
  execution that is optimal with regard to the peak memory
  usage. Under this execution, let \nd{i}{1} be the last processed
  node among the \nd{j}{1} nodes, $1 \leq j \leq \nwidgets$. We
  consider the step at which node \nd{i}{1} is processed. As, by
  hypothesis, all the \nd{j}{1} nodes, $1 \leq j \leq \nwidgets$
  and $j \neq i$, have already been processed, there are in memory at
  that step at
  least $\nwidgets-1$ results. The processing of \nd{i}{1} requires
  $\delta+1$ memory units as this node has $\delta$ children. Hence, a
  total memory usage of at least
  $(\nwidgets-1)+(\delta+1)=\delta+\nwidgets$ for the processing of
  \nd{i}{1}. This is obviously a lower bound on the optimal peak
  memory usage. We now show that this bound can be reached.

  We consider the following schedule:
  \begin{compactitem}

  \item Completely process first the subtree rooted at $\cp{1}{1}$,
    then the subtree rooted at \cp{2}{1}, and so on.
  \item The subtree rooted at \cp{i}{1} is processed as follows: for
    $j$ going from 1 to $\delta-1$, process the $\delta-j+1$ children
    of node \nd{j}{i}, then node \nd{j}{i}; then process nodes
    \fn{i}{\delta+1}, \fn{i}{\delta}, and nodes \cp{i}{\delta-1} to
    \cp{i}{1}.
  \end{compactitem}
  When the subtree rooted at \cp{i}{1} is processed there are in
  memory exactly $i-1$ results coming from the processing of the first
  $i-1$ subtrees. These are exactly the results of the processing of
  the nodes \cp{1}{1}, ..., \cp{i-1}{1}.

  The processing of node \nd{i}{j} requires a memory of $\delta-j+2$,
  for $1 \leq j \leq \delta-1$: this node has $\delta-j+1$ inputs and
  one output. When node \nd{i}{j} is processed the memory contains
  $j-1$ results due to the processing of the subtree rooted at
  \cp{1}{i}: the results of the processing of nodes \nd{i}{1} to
  \nd{i}{j-1}. Hence, the total memory usage when node \nd{i}{j} is
  processed is $(i-1)+(\delta-j+2)+(j-1)=i+\delta$.

  Accordingly when \fn{i}{\delta+1} is processed the memory usage is
  $(i-1)+(\delta-1)+1 = i+\delta-1$, and when \fn{i}{\delta+1} is
  processed it is $(i-1)+(\delta-1)+2 = i+\delta$. Later on, when node
  \cp{i}{j} is processed, for $1 \leq j \leq \delta-1$, the memory
  usage is $(i-1) + j + 2 = i + j + 1 \leq i + \delta$. Indeed, at
  that time, the only data in memory relative to the processing of the
  subtree rooted at \cp{i}{1} are 1) the results of the nodes
  \nd{i}{1} through \nd{i}{j}, 2) the result of \cp{i}{j+1} if $j <
  \delta-1$ or of \fn{i}{\delta} otherwise, and 3) the result of the
  processing of node \cp{i}{j}.
  
  Under this schedule, the peak memory usage during the processing of
  the subtree rooted at \cp{i}{1} is $i+\delta$. The overall peak
  memory usage of the studied schedule is then $\nwidgets+\delta$
  which is thus the optimal peak memory usage.

  \textbf{Lower bound on the peak memory usage of $\mathcal{A}$.}  The peak memory
  usage is not smaller than the average memory usage. We derive the
  desired contradiction by using the average memory usage of algorithm
  $\mathcal{A}$ as a lower bound to its peak memory usage.

  By hypothesis, algorithm $\mathcal{A}$ is $\alpha$ competitive with
  regard to makespan minimization. Therefore the processing of the
  tree by algorithm $\mathcal{A}$ should complete at the latest at
  time $\alpha (\delta+2)$.  To ensure that, the $\nwidgets$ \cp{i}{1}
  nodes, $1 \leq i \leq \nwidgets$, must all be executed at the latest
  at time $\alpha(\delta+2)-1$. Therefore, all the descendants of
  these nodes must be executed between time 0 and time
  $\alpha(\delta+2)-2$.  We now evaluate the number of these
  descendants and their memory footprints.
  
  The descendants of node \cp{i}{1} includes the two nodes
  \fn{i}{\delta} and \fn{i}{\delta+1}, the $\delta-2$ nodes \cp{i}{2}
  to \cp{i}{\delta-1}, the $\delta-1$ nodes \nd{i}{1} through
  \nd{i}{\delta-1}, and, finally, the descendants of the \nd{i}{j}
  nodes, for $1 \leq j \leq \delta-1$. As node \nd{i}{j} has
  $\delta-j+1$ descendants, the number of descendants of node
  \cp{i}{1} is:
  $$
  \displaystyle 2 + (\delta-2) + (\delta-1) + \sum_{j=1}^{\delta-1}
  (\delta-j+1) =
  \displaystyle   2\delta-1 + \left(\left(\sum_{j=1}^{\delta} j\right) -1\right)=  
  \displaystyle  \frac{\delta^2+5\delta-4}{2}
  $$
  All together, the nodes \cp{i}{1}, for $1 \leq i \leq \nwidgets$ have
  $\nwidgets \frac{\delta^2+5\delta-4}{2}$ descendants.

  We consider the memory footprint of each of these nodes between time
  step 0 and time step $\alpha(\delta+2)-2$. The result of the
  processing of each of theses nodes must be in memory for at least
  two steps in this interval, the step at which the node is processed
  and the step at which its parent node is processed, except for the
  nodes $\nd{1}{j}$, $1 \leq j \leq \nwidgets$, and $\cp{k}{2}$, for
  $1 \leq k \leq \nwidgets$, whose parents need not have been
  processed in that interval and thus need only to be present in
  memory during one time step. The overall memory footprint
  between time 0 and $\alpha(\delta+2)-2$ is then: 
  $$
  \nwidgets \left (\left (\frac{\delta^2+5\delta-4}{2}-2\right)\times
    2 + 2\times 1\right)= \nwidgets \left (\delta^2+5\delta-6\right).
  $$
  The average memory usage during that period is thus:
  $$
  \frac{\nwidgets \left (\delta^2+5\delta-6\right)}{\alpha(\delta+2)-2}.
  $$
  This is obviously a lower bound on the overall peak memory
  usage. This bound enables us to derive a lower bound $lb$ on the
  approximation ratio $\rho$ of algorithm $\mathcal{A}$ with regard to
  memory usage:
  \begin{equation*}
    \rho \geq lb = \frac{\frac{\nwidgets \left
          (\delta^2+5\delta-6\right)}{\alpha(\delta+2)-2}}{\nwidgets+\delta}
    = 
    \frac{\nwidgets \left (\delta^2+5\delta-6\right)}{(\alpha(\delta+2)-2)(\nwidgets+\delta)}.
  \end{equation*}
  We then let $\delta = \nwidgets^2$. Therefore,
  $$ lb =  \frac{\nwidgets \left
      (\nwidgets^4+5\nwidgets^2-6\right)}{(\alpha(\nwidgets^2+2)-2)(\nwidgets+\nwidgets^2)}.$$
  Then, $lb$ tends to $+\infty$ when $\nwidgets$ tends to
  infinity. There is thus a value $\nwidgets_0$ such that, for any
  value $\nwidgets \geq \nwidgets_0$, the right-hand side is greater
  than $2 \beta$. We let $\nwidgets=\nwidgets_0$ and we obtain:
  $$ lb =  \frac{\nwidgets_0 \left
      (\nwidgets_0^4+5\nwidgets_0^2-6\right)}{(\alpha(\nwidgets_0^2+2)-2)(\nwidgets_0+\nwidgets_0^2)}
  \geq 2 \beta,$$
  which contradicts the definition of $\beta$.
\end{proof}

\section{Heuristics}
\label{sec.heuristics}

Given the complexity of optimizing the makespan and memory at the same
time, we have investigated heuristics and propose three
algorithms: \ParSubtrees, \ParInnerFirst, and \ParDeepestFirst . The intention is that the proposed algorithms cover a range of use cases, where the optimization focus wanders between the makespan and the required memory. \ParSubtrees employs a memory-optimizing sequential algorithm for it subtrees, hence its focus is more on the memory side. In contrast, \ParInnerFirst and \ParSubtrees are list scheduling based algorithms, which should be stronger in the makespan objective. Nevertheless, \ParInnerFirst tries to approximate a postorder in parallel, which is good for memory in sequential. \ParDeepestFirst's focus is fully on the makespan.

The minimal memory requirement $M$ is achieved by using the optimal sequential algorithm~\cite{ipdps-tree-traversal}, i.e., using $p=1$ processor. Employing more processors cannot reduce the amount of memory required, yet the sequential algorithm is  of course only a $p$-approximation of the optimal parallel makespan $C_{max}^{*}$. 

\subsection{Heuristic \ParSubtrees}
\label{sec:parallelSubtrees}
The most natural idea to process a tree $T$ in parallel is arguably
its splitting into subtrees and their subsequent parallel processing,
each using the sequentially memory-optimal
algorithms~\cite{ipdps-tree-traversal,Liu87}. An underlying idea is
to give each processor a whole subtree in order to enable a lot
of parallelism while also limiting the increase of the peak memory usage that can be
observed when several processors work on the same subtree. Algorithm~\ref{algo.parSubtrees} outlines such an algorithm, together with the routine for splitting $T$ into subtrees given in Algorithm~\ref{algo.splitSubtrees}. The makespan obtained using \ParSubtrees is denoted by $C_{max}^{\ParSubtreesMath}$.

\LinesNumbered
\begin{algorithm}
\DontPrintSemicolon
\caption{\ParSubtrees($T$, $p$)\label{algo.parSubtrees}}
Split tree $T$ into $q$ subtrees ($q \leq p$) and remaining set of
nodes, using \SplitSubtrees($T$, $p$).\;
Concurrently process the $q$ subtrees, each using memory minimizing algorithm, e.g. \cite{ipdps-tree-traversal}.\;
Sequentially process remaining set of nodes, using memory minimizing algorithm.\;
\end{algorithm}

In this approach, $q$ subtrees of $T$, $q \leq p$, are processed in
parallel. Each of these subtrees is a maximal subtree of $T$. In other
words, each of these subtrees include all the descendants (in $T$) of its root.
The nodes not belonging to the $q$ subtrees are processed
sequentially. These are the nodes where the $q$ subtrees merge, the
nodes included in subtrees that where produced in excess (if more than
$p$ subtrees where created), and the ancestors of these nodes. 
An alternative approach, as discussed below, is to process all subtrees in parallel, assigning more than one subtree to each processor, but Algorithm~\ref{algo.parSubtrees} allows us to find a \textit{makespan}-optimal splitting into subtrees, established shortly in Lemma~\ref{lem:SplitSubtreesIsOptimal}.  

As \len{i} is the computation weight of node $i$, \W{i} denotes the total computation weight (i.e., sum of weights) of all nodes in the subtree rooted in $i$, including $i$.
\SplitSubtrees uses a node priority queue $PQ$ in which the nodes are
sorted by non-increasing \W{i}, and ties are broken according to
non-increasing \len{i}. $head(PQ)$ returns the first node of $PQ$,
while $popHead(PQ)$ also removes it. $PQ[i]$ denotes the $i$-th
element in the queue.

\SplitSubtrees starts with the root and continues splitting the
largest subtree (in terms of \W{}) until this subtree is a leaf node
($\W{head(PQ)} = \len{head(PQ)}$). The execution time of Step 2
of \ParSubtrees is that of the largest of the $q$ subtrees, hence
\W{head(PQ)} of the splitting. Splitting subtrees that are smaller
than the largest leaf ($\W{j}<\max_{i \in T}\len{i}$) cannot decrease
the parallel time, but only increase the sequential time. More
generally, given any splitting $s$ of $T$ into subtrees, the best
execution time for $s$ with \ParSubtrees is achieved by choosing the
$p$ largest subtrees for the parallel Step 2. This can be easily
derived, as swapping a large tree included in the sequential part with
a smaller tree included in the parallel part cannot increase the total execution time. 


\begin{algorithm}
\DontPrintSemicolon
\caption{\SplitSubtrees($T$, $p$)\label{algo.splitSubtrees}}
  Compute weights $\W{i}, \forall i \in T$\;
  $PQ \leftarrow root$\;
  $seqSet \leftarrow \emptyset$\;
  $Cost(0)=\W{root}$\;
  $s \leftarrow 1$ \tcc*{splitting rank}
  \While {$\W{head(PQ)} > \len{head(PQ)}$} {
    $node \leftarrow popHead(PQ)$\;
    $seqSet \leftarrow seqSet \cup node$\;
    $PQ \leftarrow  \Children{node}$\;
    $C_{max}^{\ParSubtreesMath}(s) = \W{head(PQ)} +
    \quad ~ ~ ~ ~ ~ ~ ~ ~ ~   \quad  \sum_{i \in seqSet}\len{i} + \sum_{i=PQ[p+1]}^{|PQ|}\W{i}$\;
    $s \leftarrow s+1$\;
  }
  Select splitting $x$ with $C_{max}^{\ParSubtreesMath}(x)=\min_{t=0}^{s-1} C_{max}^{\ParSubtreesMath}(t)$\; 
\end{algorithm}

\begin{lemma}\label{lem:SplitSubtreesIsOptimal}
 \SplitSubtrees returns a splitting of $T$ into subtrees that results in the \textit{makespan}-optimal processing of $T$ with \ParSubtrees. 
\end{lemma}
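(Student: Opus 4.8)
The plan is to recast the claim combinatorially and then check that \SplitSubtrees walks through an optimal splitting. A splitting produced for \ParSubtrees is determined by the set $S$ of nodes processed sequentially in Step~3: $S$ must be ancestor-closed ($i\in S\Rightarrow\parent{i}\in S$), and the maximal subtrees are the connected components of $T\setminus S$, rooted at the nodes of $\bigcup_{i\in S}\Children{i}$ lying outside $S$. As argued just before the statement, for a fixed $S$ the best \ParSubtrees makespan runs the $p$ heaviest of these subtrees in parallel; writing $w_1\ge w_2\ge\dots\ge w_k$ for the subtree weights, it equals $w_1+\sum_{i\in S}\len{i}+\sum_{j>p}w_j$. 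Since the subtrees partition $T\setminus S$ we have $\sum_{j=1}^{k}w_j+\sum_{i\in S}\len{i}=\W{\mathit{root}}$, so this makespan is $\W{\mathit{root}}-g(S)$ with $g(S):=\sum_{j=2}^{\min(p,k)}w_j$. Minimizing the makespan over splittings is thus the same as maximizing $g$; and \SplitSubtrees, one checks, inspects exactly the chain $\emptyset=S_0\subset S_1\subset\dots\subset S_m$ in which $S_{t+1}$ adds to $S_t$ the root $head(PQ)$ of a currently-heaviest subtree, stops once that heaviest subtree is a single leaf, and returns the cheapest $S_t$. So it suffices to prove that this ``greedy chain'' contains a maximizer of $g$.

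The engine of the proof is a monotonicity property of \emph{opening} a subtree, i.e.\ of replacing $S$ by $S\cup\{\rho\}$ for the root $\rho$ of one of its subtrees: in the weight multiset this swaps the entry $\W{\rho}$ for the entries $\{\W{c}:c\in\Children{\rho}\}$, which are all $\le\W{\rho}$ and of total $\W{\rho}-\len{\rho}\le\W{\rho}$. I claim: \emph{if $\W{\rho}$ is not a largest entry, then $g$ does not increase.} Two facts give this. First, the largest entry is unchanged (the new entries are smaller, and the maximum is not the one removed). Second, for every $q$, the sum of the $q$ largest entries of a multiset cannot grow when one entry is replaced by several entries of no larger total sum — immediately, since any $q$-subset of the new multiset becomes, after putting back the removed entry in place of the entries that replaced it, a subset of size $\le q$ of the old multiset. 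Applying this with $q=p$ and subtracting the unchanged maximum proves the claim; one sees likewise that $g$ cannot increase under a plain deletion, and that a top entry that is a single leaf can never be beaten — its weight survives in, and stays maximal in, every refinement of $S$, so no refinement of $S$ has larger $g$ than $S$. This last point both justifies \SplitSubtrees halting when the heaviest subtree is a leaf and reduces us to splittings whose sequential part contains no leaf of $T$.

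It then remains to show the greedy chain attains $\max_S g(S)$. Every ancestor-closed $S$ is obtained from $S_0$ by opening its nodes in some order compatible with the ancestor relation, and opening a node unlocks only its own descendants, so an open of a currently-heaviest subtree can always be commuted to the front of the remaining subsequence. Take $S^{\star}$ with $g(S^{\star})$ maximal, fix such a realizing sequence, and walk it against the greedy chain: its first open is necessarily that of the root of $T$ (matching $S_0\to S_1$), and whenever its first $t$ opens have matched the chain up to $S_t$, either the sequence opens the (essentially unique) heaviest subtree of $S_t$ later on — commute that open forward and continue, now matching $S_{t+1}$ — or it never does, in which case that subtree stays heaviest throughout the rest (everything else opened is lighter), so every remaining open is of a non-heaviest subtree and, by the monotonicity claim, $g$ never exceeds $g(S_t)$, giving $g(S^{\star})\le g(S_t)$. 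Since the sequence is finite, one alternative eventually applies, so $g(S^{\star})=g(S_t)$ for some $t$ on the chain; hence \SplitSubtrees returns a splitting of maximal $g$, i.e.\ of minimal \ParSubtrees makespan.

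The one genuinely delicate point is the monotonicity claim together with its use in the commutation argument; the remaining steps are bookkeeping. I would also spell out the handling of ties among subtree weights — fixing a consistent tie-break, exactly as \SplitSubtrees already does, so that ``the heaviest subtree'' is unambiguous in the induction — and double-check the routine reduction to leaf-free sequential parts invoked above (closing a leaf of $T$ only inserts a new, minimal-weight subtree, which by the ``adding an entry'' version of the monotonicity fact cannot decrease $g$).
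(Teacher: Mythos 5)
Your proof is correct in its essentials, but it takes a genuinely different route from the paper's. The paper argues by contradiction in one shot: it takes the root $r$ of a heaviest subtree of a putative better splitting $S_{opt}$, locates the first step $t$ of \SplitSubtrees at which the head of the queue has weight \W{r}, argues that every node made sequential before step $t$ must also be sequential in $S_{opt}$ (all subtrees split earlier were strictly heavier than \W{r}, so no subtree of $S_{opt}$ strictly contains a step-$t$ subtree), and concludes $Seq(t)\le Seq(S_{opt})$ while both parallel parts equal \W{r} --- a single comparison at one well-chosen step. You instead reformulate the objective as maximizing $g(S)=\sum_{j=2}^{\min(p,k)}w_j$ over ancestor-closed sequential sets via the identity $\mathit{makespan}=\W{root}-g(S)$, prove a monotonicity lemma (an opening under which an entry equal to the current maximum survives cannot increase $g$, since the sum of the $p$ largest entries cannot grow and the maximum is unchanged), and run a commutation induction showing the greedy chain dominates every splitting. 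Your version is longer but more self-contained: the $g$-identity cleanly absorbs the excess subtrees beyond $p$ into the objective, makes transparent why only opening a currently-heaviest subtree can ever help, and avoids two tacit points in the paper's argument (that a step with head weight exactly \W{r} exists, and that $Seq(t)\le Seq(S_{opt})$ follows from ``every further splitting adds at least the root's weight'' even in the presence of excess subtrees). The two items you flag are indeed the places to tighten, and neither is a gap: the monotonicity claim should be hypothesized as ``an entry equal to the maximum survives'' rather than ``$\W{\rho}$ is not a largest entry,'' since in the tie branch of your induction the opened subtree may itself be a largest entry; and the reduction to leaf-free sequential parts (deleting a leaf from $S$ only adds an entry, which cannot decrease $g$) is exactly what lets the terminal state of the chain, where the head is a leaf, close the induction.
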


\begin{proof}
 The proof is by contradiction. Let $S$ be the splitting into subtrees
 selected by \SplitSubtrees. Assume now that there is a different splitting $S_{opt}$ which results in a shorter processing with \ParSubtrees.
 
 Let $r$ be the root node of a heaviest subtree in $S_{opt}$. Let $t$
 be the first step in \SplitSubtrees where a node, say $r_t$, of
 weight \W{r} is the head of $PQ$ at the end of the step  ($r_t$ is
 not necessarily equal to $r$, as there can be more than one subtree
 of weight \W{r}). There is always such a step $t$, because all
 subtrees are split by \SplitSubtrees until at least one of the
 largest trees is a leaf node. By definition of $r$, there cannot be
 any leaf node heavier than \W{r}. The cost of the solution of step
 $t$ is $C_{max}^{\ParSubtreesMath}(t)=\W{r}+Seq(t)$, hence parallel
 time plus sequential time, denoted by $Seq(t)$. $Seq(t)$ is the total
 weight of the sequential set $seqSet$ plus the total weight of the
 surplus subtrees (that is, of all the subtrees except the $p$ ones of
 largest weights). The cost of $S_{opt}$ is $C_{max}^{*}=\W{r}+Seq(S_{opt})$, given that $r$ is the root of a heaviest subtree of $S_{opt}$ by definition.

 The splitting at step $t$ (and any other splitting considered by
 \SplitSubtrees) cannot be identical to $S_{opt}$, otherwise
 \SplitSubtrees would have selected that splitting. All subtrees that
 were split in \SplitSubtrees before step $t$ were strictly heavier
 than \W{r}. Thus, there cannot exist any subtree in $S_{opt}$, whose subtrees are part of the splitting at step $t$. Hence for every subtree $T_j$
 in the splitting at step $t$ the following property holds: either
 $T_j$ is part of $S_{opt}$ or a splitting of $T_j$ into subtrees is
 part of $S_{opt}$. It directly follows that $Seq(t) \leq
 Seq(S_{opt})$, because every splitting of a tree into subtrees
 increases the sequential time by at least the root's weight. As the
 parallel time is identical for $t$ and $S_{opt}$, namely \W{r}, it
 follows that $C_{max}^{\ParSubtreesMath}(t) \leq C_{max}^{*}$, which is a contradiction to $S_{opt}$'s shorter processing time. 
\end{proof}

\paragraph{Complexity} We first analyse the complexity of \SplitSubtrees. Computing the weights \W{i} costs $O(n)$. 
Each insertion into $PQ$ costs $O(\log(n))$ and calculating $C_{max}^{\ParSubtreesMath}(s)$ in each step costs $O(p)$. 
Given that there are $O(n)$ steps, \SplitSubtrees's complexity is $O(n(\log(n)+p))$. 
The complexity of the sequential traversal algorithms used in Steps 2 and 3 of \ParSubtrees cost at most $O(n^2)$, e.g., \cite{ipdps-tree-traversal,Liu87}, or $O(n\log(n))$ if the optimal postorder suffices. Thus the total complexity of \ParSubtrees is $O(n^2)$ or $O(n\log(n))$, depending on the chosen sequential algorithm.    

\ParSubtrees has the following guarantees for the memory requirement and makespan.

\paragraph{Memory} \ParSubtrees is a $(p+1)$-approximation algorithm for peak
memory minimization.
During the parallel part of \ParSubtrees the total memory used is less than $p$ times the memory for the complete sequential execution ($M_{seq}$), $M_{p} \leq p \cdot M_{seq}$. 
This is because each of the $p$ processors
executes a maximal subtree and that the processing of any subtree uses,
obviously, less memory (if done optimally) than the processing of the
whole tree.
During the sequential part of \ParSubtrees the memory is bounded by $M_{s} \leq M_{seq} + p \cdot \max_{i \in T}\f{i} \leq (p+1)M_{seq}$, where the second term is for the output files produced by the up to $p$ subtrees processed in parallel. Hence, in total: $M \leq (p+1) M_{seq}$  

\paragraph{Makespan} \ParSubtrees delivers a $p$-approximation
algorithm for
makespan minimization. In other
words, the makespan achieved by \ParSubtrees can be up to $p$ times
worse than the optimal makespan and thus may be not faster than the
sequential execution.
This can be derived readily with a tree of height 1 and $p \cdot k$
leaves (a fork) and $\len{i}=1, \forall i \in T$, where $k$ is a large
integer (this tree is depicted on Figure~\ref{fig:fork}). 
The optimal makespan for such a tree is $C_{max}^{*}=kp/p+1=k+1$. 
With \ParSubtrees the makespan is $C_{max}= 1 + (1+pk-p)=p(k-1)+2$. When $k$ tends to $+\infty$ the ratio between the makespans tends to $p$.

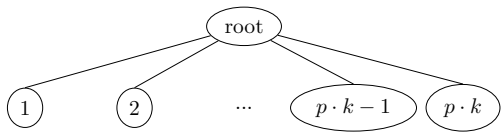
\begin{figure}
  \centering
  \scalebox{0.8}{
\begin{tikzpicture}[scale=0.9, child anchor = north]
\tikzstyle{every node}=[ellipse, draw]
  \node{root}
  [sibling distance=20mm, level distance=15mm]
  child{node{1}}
  child{node{2}}
  child{node[draw=none]{...} edge from parent [draw=none]}
  child{node{$p \cdot k-1$}}
  child{node{$p \cdot k$}}
  ;
\end{tikzpicture}
}
  \caption{\ParSubtrees is at best a $p$-approximation for the makespan.}
  \label{fig:fork}
\end{figure}

Given the just observed worst case for the makespan, a makespan
optimization for \ParSubtrees is to allocate all produced subtrees to
the $p$ processors instead of only $p$. This can be done by ordering
the subtrees by non-increasing total weight and allocating each
subtree in turn to the processor with the lowest total weight. Each of
the parallel processors executes its subtrees sequentially. This
optimized form of the algorithm shall be named \ParSubtreesOptim. Note
that this optimization should improve the makespan, but it will likely worsen the peak memory usage.


\subsection{Heuristic \ParInnerFirst}

\ParSubtrees is a high level algorithm employing sequential memory-optimized algorithms. An alternative is to design algorithms that directly work on the tree in parallel and we present two such algorithms. 
From the sequential case it is known that a \textit{postorder} traversal, while not optimal for all instances, provides good results~\cite{ipdps-tree-traversal}.
Our intention is to extend the principle of postorder traversal to the parallel processing. To do so we establish the following rules.

\noindent \textit{Parallel Postorder:}
\begin{enumerate}
 \item If an inner node (i.e., a non-leaf node) is ready to be
   processed (i.e., its input files are all in memory) then execute it. 
 \item Otherwise, select and process the leaf node that is closest (in terms of edges to be traversed) to the previously selected leaf. 
\end{enumerate}
These rules do not correspond to the usual formulation of postorder
but, when applied using a single processor, they  give rise to a
postorder traversal of the tree. Due to the concurrent processing of
nodes with $p$ processors, the resulting order will not be a perfect
postorder, but hopefully a close approximation.  

With the careful formulation of the parallel postorder we are able to base the heuristic on an event-based list scheduling algorithm \cite{Hwang1989:spg}.
Algorithm~\ref{algo.listScheduling} outlines a generic list
scheduling, driven by node finish time events. At each event at least
one node has finished so at least one processor is available for
processing nodes. Each available processor is given the respective head node of the priority queue.

\begin{algorithm}
\DontPrintSemicolon
\caption{List scheduling($T$, $p$, $O$) \label{algo.listScheduling}}
  Insert leaves in $PQ$, ordered as in $O$\;
  $eventSet \leftarrow \{0\}$   \tcc*{ascending order}
  \While(\tcc*[f]{event:node finishes}){$eventSet \neq \emptyset$}{
    $popHead(eventSet)$\;
    Insert new ready nodes in $PQ$ \tcc*{available parents of nodes
      completed at event}
    $P_a \leftarrow$ available processors\;
    \While{$P_a \neq \emptyset$ and $PQ \neq \emptyset$}{
      $proc \leftarrow popHead(P_a)$; $node \leftarrow popHead(PQ)$\;
      Assign $node$ to $proc$\;
      $eventSet~\leftarrow~eventSet \  \cup \ \ \
   ~ \ \ \ \ \ \ \ \ \ \ \ \ \ \ \ \ \ \ \ \ \   finishTime(node)$\;
    }
   }
\end{algorithm}

The order in which nodes are processed in Algorithm~\ref{algo.listScheduling} is determined by two aspects: i) the node order $O$ given as input; and ii) the ordering established by the priority queue $PQ$. 

For our proposed parallel postorder algorithm, called \ParInnerFirst, the priority queue uses the following ordering: 1) inner nodes, ordered by non-increasing depth; 2) leaf nodes as ordered in the input order $O$. To achieve a parallel postorder, the node ordering $O$ needs to be a  sequential postorder. It makes heuristic sense that this postorder is an optimal sequential postorder, so that memory consumption can be minimized ~\cite{Liu86}.  

\paragraph{Complexity}
The complexity of \ParInnerFirst is that of determining the input order $O$ and that of the list scheduling. Computing the optimal sequential postorder is $O(n\log{n})$~\cite{Liu86}. In the list scheduling algorithm there are  $O(n)$ events and $n$ nodes are inserted and retrieved from $PQ$. An insertion into $PQ$ is $O(\log{n})$, so the list scheduling complexity is $O(n\log{n})$. Hence, the total complexity is also $O(n\log{n})$. 

In the following we study the memory requirement and makespan of \ParInnerFirst. 

\paragraph{Memory}
There is no limit on the required memory compared to the optimal
sequential memory $M_{seq}$. This is derived considering the tree
in Figure~\ref{fig:NoMemBoundInnerFirst}. All output files have size
1 and the execution files have size 0 ($\f{i}=1,\n{i}=0$ for any node
$i$ of $T$). When optimally processing with $p=1$, we process the
leaves in a deepest first order. The resulting optimal memory
requirement is $M_{seq}=p+1$, reached when processing a join node.
With $p$ processors all leaves have been processed at the time the first join node ($k-1$) can be executed. (The longest chain has length $2k$.) At that time there are $(k-1)\cdot(p-1)+1$ files in memory. When $k$ tends to $+\infty$ the ratio between the memory requirements also tends to $+\infty$.     

    \begin{figure}
    \centering
    \begin{tikzpicture}[scale=0.5,
     level/.style={level distance=12mm, sibling distance=7mm},
     vertex/.style={circle,solid,draw},
     ghost/.style={circle}]

      \makeatletter
      \let\mypgfutil@ifnch\pgfutil@ifnch
      \def\pgfutil@ifnch{%
      \let\x@next\@empty
      \ifx\pgfutil@let@token\children\let\pgfutil@let@token c\let\x@next\expandafter\fi
      \ifx\pgfutil@let@token\chain\let\pgfutil@let@token n\let\x@next\expandafter\fi
      \x@next\mypgfutil@ifnch}
      \makeatother
      
      \newcommand{\chain}{%
        node[vertex,label=right:$k$] {} child{ node[vertex] {} edge from parent[solid] child{node[vertex] {} edge from parent[dotted] child{node[vertex,label=right:$2k$] {} edge from parent[solid]}}}%
      }%

      \newcommand{\children}{%
        child{node[vertex,label=below:1] {} edge from parent[solid]}
        child{node[ghost] {$\cdots$} edge from parent[draw=none]}
	child{node[vertex] {} edge from parent[solid]} 
        child{node[ghost] {$\cdots$} edge from parent[draw=none]} 
        child{node[vertex,label=below:$p-1$] {} edge from parent[solid]} 
        child{node[ghost] {} edge from parent[draw=none]}%
      }%

      \newcommand{\childrennolabel}{%
        child{node[vertex] {} edge from parent[solid]}
        child{node[ghost] {$\cdots$} edge from parent[draw=none]}
	child{node[vertex] {} edge from parent[solid]} 
        child{node[ghost] {$\cdots$} edge from parent[draw=none]} 
        child{node[vertex] {} edge from parent[solid]} 
        child{node[ghost] {} edge from parent[draw=none]}%
      }%
     
     \footnotesize 
     \node[vertex,label=right:1] (root) {}  
       \children
       child{node[vertex] {}
         child{node[vertex,right=45] {} edge from parent[dotted]
	   \childrennolabel
	   child{node[vertex,label=right:$k-1$] {} edge from parent[solid]
             \children
             child{\chain}
           }
         }
       }
     ;

    \end{tikzpicture}
      \caption{No memory bound for \ParInnerFirst.}\label{fig:NoMemBoundInnerFirst}
    \end{figure}

%

\paragraph{Makespan} \ParInnerFirst schedule is a
$(2-\frac{1}{p})$-approximation algorithm for makespan minimization
because \ParInnerFirst is a list scheduling algorithm~\cite{GrahamList}.

\subsection{Heuristic \ParDeepestFirst}
The previous heuristic \ParInnerFirst is motivated by good memory results for sequential postorder. Going the opposite direction, a heuristic objective can be the minimization of the makespan. For trees, all inner nodes depend on the leaf nodes, so it makes heuristic sense to try to process the deepest nodes first to reduce any possible waiting time. For the parallel processing of the tree, the most meaningful definition of the depth of a node $i$ is the \len{}-weighted length of the path from $i$ to the root of the tree. This path length includes the \len{i}. The deepest node is the first node of the critical path of the tree.

\ParDeepestFirst is our proposed algorithm that does this. Due to the general nature of the list scheduling presented in Algorithm~\ref{algo.listScheduling}, we can implement \ParDeepestFirst with it. To achieve the deepest first processing the priority queue $PQ$ orders the nodes as follows: 
1) deepest nodes first (in terms of \len{}-weighted path length to root); 2) inner nodes before leaf nodes; 3) leaf
nodes are ordered in the input order $O$. Note that the leaf order is
only relevant for leaves of the same depth. This order should
nevertheless be ``reasonable'', i.e., it should not alternate between leaves from different parents, which would be bad for the memory consumption. Such an order is again easily achieved when $O$ is a sequential postorder.    


\paragraph{Complexity}
The complexity is the same as for \ParInnerFirst, namely $O(n\log{n})$. See \ParInnerFirst's complexity analysis. 

Now we study the memory requirement and the makespan of \ParDeepestFirst. 

\paragraph{Memory}
The required memory of \ParDeepestFirst is unbounded compared to the
optimal sequential memory $M_{seq}$. Consider the tree in
Figure~\ref{fig.combWithChains} with many long chains, assuming the
Pebble Game model (i.e., $\f{i}=1$, $\n{i}=0$, and $\len{i}=1$ for any
node $i$ of $T$). The optimal sequential memory requirement is 3.
The memory usage of \ParDeepestFirst will be proportional to
the number of leaves, because they are all at the same depth, the
deepest one. As we can build a tree like the one of
Figure~\ref{fig.combWithChains} for any predefined number of chains,
the ratio between the memory required by \ParDeepestFirst and the
optimal one is unbounded.

    \begin{figure}
    \centering
    \begin{tikzpicture}[scale=0.5]
     \tikzstyle{level 1}=[level distance=12mm, sibling distance=20mm]
     \tikzstyle{vertex}=[circle,draw] 

     \node[vertex]{}
       child{node[vertex]{}
         child{node[vertex]{}
           child{node[vertex]{} edge from parent[dotted,thick]
             child[solid,thin]{node[vertex]{}
               child{node[vertex]{}
                 child{node[vertex]{}}
               }
             }
           }
         } 
       }
       child{node[vertex]{}
         child{node[vertex]{}
           child{node[vertex]{} edge from parent[dotted,thick]
             child[solid,thin]{node[vertex]{}
               child{node[vertex]{}
                 child{node[vertex]{}}
               }
             }
           }
         } 
         child{node[draw=none]{} edge from parent[dotted,thick]   
           child[]{node[draw=none]{} edge from parent[draw=none]
             child{node[draw=none]{} edge from parent[draw=none]
               child{node[draw=none]{} edge from parent[draw=none]
                 child{node[draw=none]{} edge from parent[draw=none]}
               }
             }
           } 
           child[thin,solid]{node[vertex]{} edge from parent[dotted, thick]
             child[thin,solid]{node[vertex]{}
               child{node[vertex]{}
                 child{node[vertex]{}}
               } 
             }
             child[thin,solid]{node[vertex]{}
               child{node[vertex]{}
                 child{node[vertex]{}}
               }
               child{node[vertex]{}
                 child{node[vertex]{}}
                 child{node[vertex]{}}
               }
             }
           }
         }
       }
     ;

    \end{tikzpicture}
      \caption{Tree with long chains.} \label{fig.combWithChains}
    \end{figure}
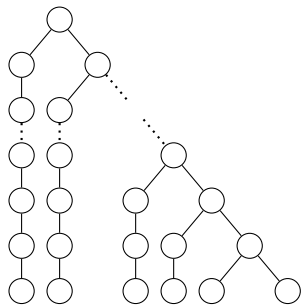

\paragraph{Makespan} \ParDeepestFirst schedule is a
$(2-\frac{1}{p})$-approximation algorithm for makespan minimization
because \ParDeepestFirst is, like \ParInnerFirst, a list scheduling algorithm~\cite{GrahamList}.


\section{Experimental validation}
\label{sec.experiments}

In this section, we experimentally compare the heuristics proposed in
the previous section, and we compare their performance to lower bounds.

\subsection{Setup}

All heuristics have been implemented in C. Special care has been
devoted to the implementation to avoid complexity issues. Especially,
priority queues have been implemented using binary heap to allow for
$O(\log n)$ insertion and minimum extraction\footnote{
The code and the data sets 
are available
online at~\url{http://graal.ens-lyon.fr/~lmarchal/scheduling-trees/}}. 

Instead of implementing an intricate algorithm with $O(n^2)$
complexity such as Liu's algorithm~\cite{Liu87} to obtain minimum
sequential memory, we have chosen to estimate this minimum memory
using the optimal post-order traversal. We have shown
in~\cite{ipdps-tree-traversal} that this traversal was optimal in
95.8\% of the tested cases, with an average increase of 1\% with
respect to the optimal. This justifies
this choice. Since the reference sequential task-graph traversal serves as a basis for ordering
nodes in a number of our heuristics, a large complexity would be
prohibitive for this first step.

\subsection{Data set}

The data set contains assembly trees of a set of sparse matrices
obtained from the University of Florida Sparse Matrix Collection
(\url{http://www.cise.ufl.edu/research/sparse/ matrices/}). The
chosen matrices satisfy the following assertions: not binary, not
corresponding to a graph, square, having a symmetric pattern, a number
of rows between 20,000 and 2,000,000, a number of nonzeros per row
at least equal to 2.5, and a number of nonzeros per row at most
equal to 5,000,000; and each chosen matrix 
has the largest number of
nonzeros among the matrices in its group satisfying the
previous assertions. At the time of testing there were 76 matrices
satisfying these properties.  We first order the matrices using
MeTiS~\cite{kaku:98:metis} (through the MeshPart toolbox~\cite{gimt:98})
and {\tt amd} (available in Matlab), and then build the corresponding
elimination trees using the {\tt symbfact} routine of Matlab.  We also
perform a relaxed node amalgamation on these elimination trees to create
assembly trees. We have created a large set of instances by allowing
1, 2, 4, and 16 (if more than $1.6 \times 10^5$ nodes) relaxed amalgamations per
node.
At the end we compute memory weights and processing times to
accurately simulate the matrix factorization: we compute the memory
weight \n{i} of a node as $\eta^2 +2\eta(\mu-1)$, where $\eta$ is the
number of nodes amalgamated, and $\mu$ is the number of nonzeros in
the column of the Cholesky factor of the matrix which is associated
with the highest node (in the starting elimination tree); the
processing cost \len{i} of a node is defined as $2/3\eta^3 +
\eta^2(\mu-1) + \eta (\mu-1)^2$ (these terms corresponds to one
gaussian elimination, two multiplications of a triangular
$\eta\times\eta$ matrix with a $\eta\times(\mu-1)$ matrix, and one
multiplication of a $(\mu-1)\times \eta$ matrix with a $\eta \times
(\mu-1)$ matrix).  The memory weights \f{i} of edges are computed as
$(\mu-1)^2$.

The resulting 608 trees contains from 2,000 to 1,000,000 nodes. Their
depth ranges from 12 to 70,000 and their maximum degree ranges from 2
to 175,000. Each heuristic is tested on each tree using $p=2$, 4, 8,
16, and 32 processors. Then the memory and makespan of the resulting
schedules are evaluated by simulating a parallel execution.

\subsection{Results}

\begin{table*}[htbp]
  \centering
  \resizebox{\textwidth}{!}{%
  \begin{tabular}[tabular]{|c||c|c|c||c|c|c|}
    \hline
\multirow{2}{*}{Heuristic}  & \multirow{2}{*}{Best memory}  & Within 5\% of & Avg. deviation from  & \multirow{2}{*}{Best makespan}  & Within 5\% of & Avg. deviation\\ 
    &  &  best memory & optimal (seq.) memory  &  & best makespan & from best makespan\\ 
    \hline
\ParSubtrees  & 81.1 \% & 85.2 \% & 133.0 \%     & 0.2 \%  & 14.2 \%  & 34.7 \% \\
\ParSubtreesOptim  & 49.9 \% & 65.6 \% & 144.8 \%     & 1.1 \%  & 19.1 \%  & 28.5 \% \\
\ParInnerFirst  & 19.1 \% & 26.2 \% & 276.5 \%     & 37.2 \%  & 82.4 \%  & 2.6 \% \\
\ParDeepestFirst  & 3.0 \% & 9.6 \% & 325.8 \%     & 95.7 \%  & 99.9 \%  & 0.0 \% \\
    \hline
  \end{tabular}}
  \caption{Proportions of scenarii when heuristics reach best (or
    close to best) performance, and average deviations from optimal
    memory and best achieved makespan.}
  \label{tab.cmp}
\end{table*}

\begin{figure*}[htbp]
  \centering
  \includegraphics[width=0.7\linewidth]{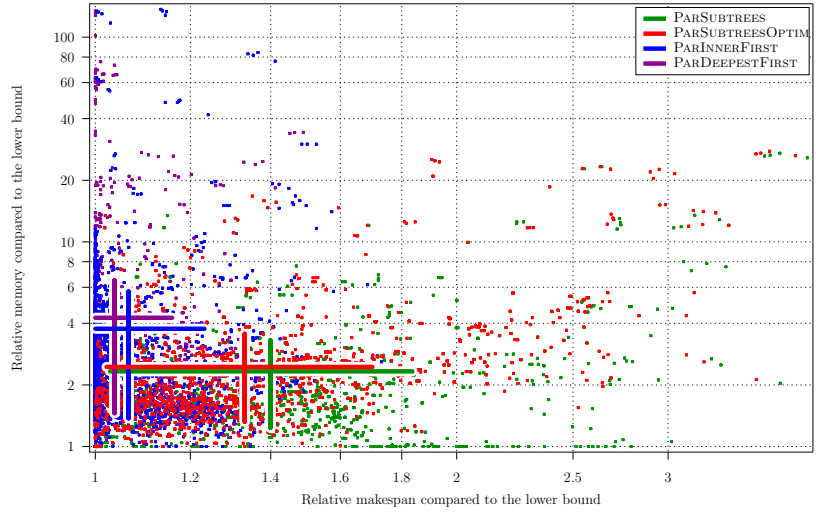}
  \caption{Comparison to lower bounds.}
  \label{fig.cmp.lb}
\end{figure*}

The comparison of the heuristics is summarized in
Table~\ref{tab.cmp}. It shows that \ParSubtrees and \ParSubtreesOptim
are the best heuristics for memory minimization. On average they use
less than 2.5 times the amount of memory required by the best
sequential postorder (whose memory usage is very close to the optimal sequential memory as
noted above), when \ParInnerFirst and \ParDeepestFirst need
respectively 3.7 and 5.2 times this amount of memory. \ParInnerFirst
and \ParDeepestFirst perform best for makespan minimization, having
makespans very close on average to the best achieved ones. As the
scheduling problem, without memory constraints, is already NP-hard, we
do not know what the optimal makespan is. We have seen however
that \ParInnerFirst and \ParDeepestFirst are 2-approximation
algorithms for the makespan. Furthermore, given the critical path
oriented node ordering, we can expect that \ParDeepestFirst's makespan
is close to optimal. \ParInnerFirst outperforms \ParInnerFirst for makespan minimization, at the cost of a
noticeable increase in memory. 
\ParSubtrees and \ParSubtreesOptim may be better trade-offs, since
their average deviation from best makespan is under 35\%.



\begin{figure*}[htbp]
  \centering
  \includegraphics[width=0.7\linewidth]{errorbars-parallel-subtrees.fig}
  \caption{Comparison to \ParSubtrees.}
  \label{fig.cmp.parsubtrees}
\end{figure*}

\begin{figure*}[htbp]
  \centering
  \includegraphics[width=0.7\linewidth]{errorbars-inner-first.fig}
  \caption{Comparison to \ParInnerFirst.}
  \label{fig.cmp.innerfirst}
\end{figure*}

Figures~\ref{fig.cmp.lb},~\ref{fig.cmp.parsubtrees},
and~\ref{fig.cmp.innerfirst} provide complete results of the
simulations. In each figure, a point represent one scenario (one
heuristic on one tree with a given number of processors). To better
visualize the distribution, we also plot a ``cross'' for each
heuristic: the center of this cross is the average performance, while
the branches represent the scope of each objective between the 10th
and the 90th percentile of the distribution.

On Figure~\ref{fig.cmp.lb}, we plot the results of all simulations
compared to some estimations of the lower bounds. The lower bound for
memory minimization is the memory usage of the best sequential
postorder, which is known to be very close to the optimal sequential
traversal. The lower bound for the makespan is the maximum between the
total processing time of the tree divided by the number of processors,
and the maximum weighted critical path. This
figure exhibits the same trends for average values as noted in
Table~\ref{tab.cmp}. When the maximum deviation from the lower bound
on the makespan is around 4, the ratio of the parallel memory usage to
the optimal sequential one can be far larger, as it is larger than
100 for the extreme cases.

In the following figures, the results of the heuristics 
 is normalized by the results of \ParSubtrees
(Figure~\ref{fig.cmp.parsubtrees}) or \ParInnerFirst
(Figure~\ref{fig.cmp.innerfirst}). 
As expected, \ParSubtreesOptim gives results
close to those of \ParSubtrees, with better makespans but slightly worse
memory usage. \ParDeepestFirst always use more memory than \ParInnerFirst,
while having comparable makespans. In most cases, \ParInnerFirst gives slightly
better makespan than \ParSubtrees, but uses more memory.




  






\section{Conclusion}

In this study we have shown that the parallel version of the pebble
game on trees is NP-complete, hence stressing the negative impact of the memory
constraints on the complexity of the problem. More importantly, we
have shown that there does not exist any algorithm that is
simultaneously an approximation algorithm for both makespan
minimization and peak memory usage minimization when scheduling
tree-shaped task graphs. We have thus designed heuristics for this
problem. We have assess their performance using real task graphs
arising from sparse matrices computation. These simulations showed
that two of the heuristics, \ParSubtrees and \ParSubtreesOptim, only
needed, for their parallel executions, and on average, 2.5 times the
sequential memory, while achieving makespans that were less than 35\%
larger than best achieved ones. These heuristics appear thus to
deliver interesting trade-offs between memory usage and execution
times.  In the future work, we will consider designing scheduling
algorithms that take as input a cap on the memory usage.

\section*{Acknowledgement}

We thank Bora Uçar for his help in creating and
managing the data set used in the experiments.
We gratefully acknowledge that this work is partially supported by the
Marsden Fund Council from Government funding, Grant 9073-3624767,
administered by the Royal Society of New Zealand.

\bibliographystyle{plain}
\bibliography{biblio}

\end{document}